% !TEX TS-program = pdflatex
\documentclass[%
 reprint,
%superscriptaddress,
%groupedaddress,
%unsortedaddress,
%runinaddress,
%frontmatterverbose,
%preprint,
%showpacs,preprintnumbers,
%nofootinbib,
%nobibnotes,
%bibnotes,
 amsmath,amssymb,
 aps,
%pra,
%prb,
%rmp,
%prstab,
%prstper,
%floatfix,
]{revtex4-1}

\usepackage{algorithm}
\usepackage{algpseudocode}
\usepackage{amsthm}
\usepackage{amsfonts}
\usepackage{bbm}
\usepackage{graphicx}
\usepackage{epstopdf}
\usepackage{dcolumn}% Align table columns on decimal point
\usepackage{bm}% bold math
\usepackage{braket}
\usepackage{color}
\usepackage{enumerate}
%\usepackage{cuted}%避免双栏图片和算法到下一页
%\usepackage{stfloats}%避免双栏图片和算法到下一页
%\usepackage{hyperref}% add hypertext capabilities
%\usepackage[mathlines]{lineno}% Enable numbering of text and display math
%\linenumbers\relax % Commence numbering lines

%\theoremstyle{definition}\newtheorem{definition}{Definition}
\theoremstyle{plain}\newtheorem{theorem}{Theorem}
\theoremstyle{definition}
\theoremstyle{definition}
\theoremstyle{plain}
\theoremstyle{plain}
\theoremstyle{plain}
\theoremstyle{plain}

%\usepackage[showframe,%Uncomment any one of the following lines to test
%%scale=0.7, marginratio={1:1, 2:3}, ignoreall,% default settings
%%text={7in,10in},centering,
%%margin=1.5in,
%%total={6.5in,8.75in}, top=1.2in, left=0.9in, includefoot,
%%height=10in,a5paper,hmargin={3cm,0.8in},
%]{geometry}

\begin{document}

\preprint{APS/123-QED}
%\linenumbers
\title{Distributed quantum algorithm for Simon's problem}
%\thanks{A footnote to the article title}%

\iffalse
\author{Jiawei Tan$^{1,2,\dag}$, Ligang Xiao$^{1,2}$, Daowen Qiu$^{1,2,5,\ddag}$, Le Luo$^{3,5}$, Paulo Mateus$^{4}$}%
\email{$\dag$912574652@qq.com;\\$\ddag$ issqdw@mail.sysu.edu.cn (Corresponding author's address)}
\affiliation{
$^1$ Institute of Quantum Computing and Computer Theory, School of Computer Science and Engineering, Sun Yat-sen University, Guangzhou 510006, China \\
$^2$ The Guangdong Key Laboratory of Information Security Technology, Sun Yat-sen University, 510006, China\\
$^3$ School of Physics and Astronomy, Sun Yat-sen University, 519082 Zhuhai, China \\
$^4$ Instituto de Telecomunica\c{c}\~{o}es, Departamento de Matem\'{a}tica,
Instituto Superior T\'{e}cnico,  Av. Rovisco Pais 1049-001  Lisbon, Portugal\\
$5$ QUDOOR Technologies Inc., Guangzhou, China
}

\fi
\author{Jiawei Tan$^\dag$, Ligang Xiao, Daowen Qiu$^\ddag$}
\email{$\dag$912574652@qq.com;\\$\ddag$ issqdw@mail.sysu.edu.cn (D.W. Qiu, Corresponding author's address)}
\affiliation{
 Institute of Quantum Computing and Computer Theory, School of Computer Science and Engineering, Sun Yat-sen University, Guangzhou 510006, China \\
 The Guangdong Key Laboratory of Information Security Technology, Sun Yat-sen University, 510006, China\\
QUDOOR Technologies Inc., Guangzhou, China}

\author{Le Luo}
\affiliation{
	School of Physics and Astronomy, Sun Yat-sen University, 519082 Zhuhai, China\\
QUDOOR Technologies Inc., Guangzhou, China
}
\author{Paulo Mateus}
\affiliation{
	Instituto de Telecomunica\c{c}\~{o}es, Departamento de Matem\'{a}tica,
	Instituto Superior T\'{e}cnico,  Av. Rovisco Pais 1049-001  Lisbon, Portugal
}
\date{\today}

\begin{abstract}
Limited by today's physical devices, quantum circuits are usually noisy and difficult to be designed deeply. The novel computing architecture of distributed quantum computing is expected to reduce the noise and depth of quantum circuits. In this paper, we study the Simon's problem in distributed scenarios and design a distributed quantum algorithm to solve the problem.  The algorithm proposed by us has the advantage of exponential acceleration compared with the classical distributed computing, and has the advantage of square acceleration compared with the best distributed quantum algorithm proposed before. In particular, the previous distributed quantum algorithm for Simon's problem can not be extended to the case of more than {\it two computing nodes} (i.e. two subproblems), but our distributed quantum algorithm can be extended to the case of {\it multiple computing nodes} (i.e. multiple subproblems) as well.

\newtheorem{defi}{Definition}

%\begin{description}
%\item[Usage]
%Secondary publications and information retrieval purposes.
%\item[PACS numbers]
%May be entered using the \verb+\pacs{#1}+ command.
%\item[Structure]
%You may use the \texttt{description} environment to structure your abstract;
%use the optional argument of the \verb+\item+ command to give the category of each item.
%\end{description}
\end{abstract}

\pacs{Valid PACS appear here}% PACS, the Physics and Astronomy
                             % Classification Scheme.
%\keywords{Suggested keywords}%Use showkeys class option if keyword
                              %display desired
\maketitle

%\tableofcontents

\section{INTRODUCTION}{\label{Sec1}}
% Put \label in argument of \section for cross-referencing
Quantum computing has been proved to have great potential in factorizing  large numbers  \cite{shor_polynomial-time_1997}, unordered database search \cite{grover_fast_1996} and chemical molecular simulation \cite{aspuru-guzik_simulated_2005}\cite{cao_quantum_2019}. However, due to the limitations of today's physical devices, large-scale general quantum computers have not been realized. At present, quantum technology has been entered to the NISQ (Noisy Intermediate-scale Quantum) era \cite{preskill_quantum_2018}. So it is possible that we can implement quantum algorithms on middle-scale circuits.

Distributed quantum computing is a novel computing architecture, which combines quantum computing with distributed computing \cite{goos_distributed_2003}\cite{beals_efficient_2013}\cite{Qiu2017DQC}\cite{Qiu22}. In distributed quantum computing architecture, multiple quantum computing nodes are allowed to communicate information through channels and cooperate to complete computing tasks. Compared with centralized quantum computing, the circuit size and depth can be reduced by using distributed quantum computing, which helps to reduce the noise of circuits. In the current NISQ era, the adoption of distributed quantum computing technology is likely conducive to the earlier successful implementation of quantum algorithms.

Simon's problem is an important problem in the history of quantum computing \cite{simon_power_1997}. For Simon's problem, quantum algorithms have the advantage of exponential acceleration over the best classical algorithms \cite{cai_optimal_2018}. Simon's algorithm has a great enlightening effect on the subsequent proposal of Shor's algorithm which can decompose large numbers and compute discrete logarithms in polynomial time \cite{shor_polynomial-time_1997}. Avron et al. proposed a distributed quantum algorithm to solve Simon's problem \cite{avron_quantum_2021}. Their algorithm has the advantage of exponential acceleration compared with the classical algorithm, but in the worst case, it needs $O(n ^ 2)$ queries to solve the Simon's problem (here $n$ represents the length of the string inputted into the oracle in Simon's problem).

In this paper, we study the Simon's problem in distributed scenarios and design a distributed quantum algorithm with query complexity $O(n)$ to solve Simon's problem.  The algorithm proposed by us has the advantage of exponential acceleration compared with the classical distributed computing, and has the advantage of square acceleration compared with the best distributed quantum algorithm proposed before \cite{avron_quantum_2021}. In particular, the previous distributed quantum algorithm can not be extended to the case of more than two computing nodes (i.e. two subproblems), but our distributed quantum algorithm can also be extended to the case of multiple computing nodes (i.e. multiple subproblems), which is important in distributed quantum computing.

The remainder of the paper is organized as follows. In Sec. \ref{Sec2}, we recall Simon's problem and quantum teleportation. Then in Sec. \ref{Sec3}, we introduce the Simon's problem in distributed scenarios by dividing  it into multiple subproblems. After that, in Sec. \ref{Sec4}, we describe the distributed quantum algorithm which we design. The correctness of the algorithm we design is proved in Sec. \ref{Sec5}. In addition,  we compare the efficiency and scalability of our algorithm with distributed classical algorithm and another distributed quantum algorithm proposed in \cite{avron_quantum_2021}. Finally in Sec. \ref{Sec7}, we conclude with a summary. %in Sec. \ref{Sec6},

\section{PRELIMINARIES}\label{Sec2}

In this section, we serve to review Simon's problem and quantum teleportation that are useful in the paper.

% \subsection{Notations and Definations}

\subsection{Simon's problem}

Simon's problem is a special kind of hidden subgroup problem \cite{kaye_introduction_2007}. We can describe the Simon problem as follows: Consider a function $f:\{0,1\}^n \rightarrow \{0,1\}^m$, where we have the promise that there is a string $s \in \{0,1\}^n$, such that $f(x) = f(y)$ if and only if $x = y$ or $x \oplus y = s$. The $\oplus$ here stands for binary bitwise exclusive or. We have an oracle that can query the value of function $f$. In classical computing, for any $x \in \{0,1\}^n$ and $y \in \{0,1\}^m$, if we input $(x, y)$ into the oracle, we will get $(x, y \oplus f(x))$. In quantum computing, for any $x \in \{0,1\}^n$ and $y \in \{0,1\}^m$, if we input $|x\rangle|y\rangle$ into the oracle, we will get $|x\rangle|y \oplus f(x)\rangle$. Our goal is to find the hidden string $s$ by performing the minimum number of queries to $f$.

On a classical computer, the best algorithm for solving Simon's problem requires $\Theta(\sqrt{2^n})$ queries \cite{cai_optimal_2018}. However, the best quantum algorithm to solve Simon's problem requires $O(n)$ queries \cite{simon_power_1997}. In the history of quantum computing, Simon's problem is the first to show that quantum algorithm has exponential acceleration compared with classical probabilistic algorithm.

% \subsection{Operations of binary strings}

% We define operations of binary strings used in this paper as follows:

% \begin{defi}
%   For any binary string $u \in \{0,1\}^{n}$ and character $c \in \{0,1\}$, $uc$ is a binary string of length $n + 1$ which is the concatenation of the string $u$ and the character $c$.
% \end{defi}

% \begin{defi}
%   For any binary string $u \in \{0,1\}^{n}$ and character $v \in \{0,1\}^{m}$, $uv$ is a binary string of length $n + m$ which is the concatenation of the string $u$ and the string $v$.
% \end{defi}

% \begin{defi}
%   For any character $c \in \{0,1\}$ and positive integer $n$, $c^n$ is a binary string of length $n$ which represents $\underbrace{cc\ldots c}_{n}$.
% \end{defi}

% \begin{defi}
%   For any binary string $u,v \in \{0,1\}^{n}$, $u \oplus v$ denotes bitwise xor of $u$ and $v$.
% \end{defi}

% \begin{defi}
%   For any binary string $u,v \in \{0,1\}^{n}$, $u \cdot v$ denotes $(\sum_{i = 1}^{n}u_i \cdot v_i) \mod 2$.
% \end{defi}

% \begin{defi}
%   For any binary string $u \in \{0,1\}^{n}$, $u^{\perp}$ denotes $\{v \in \{0,1\}^{n}| u \cdot v = 0\}$.
% \end{defi}

\subsection{Quantum teleportation}\label{quantum_teleportation}

Quantum teleportation is an amazing discovery \cite{bennett_teleporting_1993}. By sharing a classical information channel and a pair of entangled states, one can teleport an unknown quantum state to another distant location without actually transmitting physical qubits. This process protects the qubits from being destroyed during transport.

In distributed quantum computing, quantum gates that across multiple computing nodes may be used. We can teleport a quantum state from one computing node to another \cite{caleffi_quantum_2018}, and then apply a multi-qubit gate on the combined state that teleportation can be carried out. %of the two nodes at the computing node being teleported.
In this way, we can  implement qubit gates across multiple computing nodes.

\section{Simon's problem in the distributed scenario }\label{Sec3}

In order to better compare our algorithm with the classical distributed algorithm, we describe Simon's problem in the distributed scenario in this section.

In the distributed case, the original function $f$ is divided into two parts, which can be accessed by two subfunctions of domain $\{0,1\}^{n-1}$: $f_e$ and $f_o$($\forall u \in \{0,1\}^{n-1}, f_e(u)=f(u0), f_o(u)=f(u1)$). Consider this scenario: Alice has an oracle $O_{f_e}$ that can query all $f_e(u)$ for all $u \in \{0,1\}^{n-1}$, and Bob has an oracle $O_{f_o}$ that can query all $f_o(u)$ for all $u \in \{0,1\}^{n-1}$.  Here $u0$ (or $u1$)  represents the connection between string $u$ and character $0$ (or $1$). So Alice and Bob's oracles split the domain of function $f$ into two parts at the last bit of the domain. Alice and Bob each know half of the information about the function $f$, but neither knows all of the information about $f$. They need to find the hidden string $s$ by querying their own oracle as few times as possible and exchanging information. The method in  \cite{avron_quantum_2021} solves this problem with $O(n^2)$ queries.

 In this paper, we propose a new distributed quantum algorithm to solve this problem with $O(n)$ queries. In particular, our method can deal with more general case, that is, if there are $2^t$ people, each of whom has an oracle $O_{f_w}$($w \in \{0,1\}^t$ is each person's unique identifier) that can query all $f_w(u)=f(uw)$ for all $u \in \{0,1\}^{n-t}$ ($uw$ here represents the connection between string $u$ and string $w$). So each person can access $2^{n-t}$ values of $f$.% and there is no intersection between the values that each person can access.
They need to find the hidden string $s$ by querying their own oracle as few times as possible and exchanging information. Actually, this problem cannot be solved by the method in \cite{avron_quantum_2021}.

%Note that the $2^t$ oracles are not necessarily decomposed in the last $t$ bits of $f$'s domain, but can be any $t$ bits in the $n$ bits of $f$'s domain. However, the problem can be equivalent to the final $t$-bit decomposition in the domain of $f$ by logical relabeling.

Next, we further introduce some notations that will be used in the this paper.

\begin{defi}
  For all $u \in \{0,1\}^{n-t}$, let $S(u)$ represent a string of length $2^tm$ by concatenating all strings $f(uw)$ (where $w\in\{0,1\}^t$) according to lexicographical order, that is,
  \begin{align}
  S(u)=f(uw_0)f(uw_1)\cdots f(uw_{2^t-1})
  \end{align}
  where $f(uw_0), f(uw_1),\ldots,f(uw_{2^t-1})\in \{0,1\}^m$ are lexicographical order and $w_i\in\{0,1\}^t$ ($i=0,1,\ldots,2^t-1$) where $w_i\neq w_j$ for any $i\neq j$.
   %$(w_0, w_1, \ldots, w_{2^t-1})$ %is a permutation of $(\underbrace{00\ldots 0}_t,\underbrace{00\ldots 1}_t,\ldots,\underbrace{11\ldots 0}_t,\underbrace{11\ldots 1}_t)$.
\end{defi}

An example of $S(u)$ is given in Appendix \ref{example}.

\begin{defi}
  For any binary string $u,v \in \{0,1\}^{n}$, $u \cdot v$ denotes $(\sum_{i = 1}^{n}u_i \cdot v_i) \mod 2$.
\end{defi}

\begin{defi}
  For any binary string $u \in \{0,1\}^{n}$, $u^{\perp}$ denotes $\{v \in \{0,1\}^{n}| u \cdot v = 0\}$.
\end{defi}

Let $s$ be the target string to be found, and
 denote $s=s_1s_2$ where the length of $s_1$ is $n-t$. %We can see that finding $s_2$ is simple:
 We can query each oracle $O_{f_{w}}$ once to get $f(0^{n-t}w)$ ($w \in \{0,1\}^t$; here for any character $c \in \{0,1\}$ and positive integer $n$, $c^n$ is a binary string of length $n$ which represents $\underbrace{cc\ldots c}_{n}$), and then find $w_1, w_2 \in \{0,1\}^t$ with  $w_1\neq w_2$ such that $f(0^{n-t}w_1)=f(0^{n-t}w_2)$. According to the definition of Simon's problem, we have $0^{n-t}w_1 \oplus 0^{n-t}w_2 = 0^n$ or $0^{n-t}w_1 \oplus 0^{n-t}w_2 = s$. Then we have $w_1 \oplus w_2 = 0^t$ or $w_1 \oplus w_2 = s_2$. Since $w_1 \neq w_2$, we have $w_1 \oplus w_2 = s_2$. So we can obtain $s_2=w_1 \oplus w_2$. If we can not find $w_1$ and $w_2$ such that $f(0^{n-t}w_1)$ is equal to $f(0^{n-t}w_2)$, we can obtain $s_2 = 0^t$.%\footnote{$0^{n-t}$ represents \underbrace{00...0}_{n-t}}

So the difficulty of Simon's problem in distributed scenario is to find $s_1$ with as few queries as possible. We will introduce our algorithm to find $s_1$ in the next section. It should be noted that our algorithm also works for the original Simon's problem. The purpose in this section is to better compare our algorithm with other classical or quantum distributed algorithms and to illustrate the cases that our distributed quantum algorithm can handle while original Simon's algorithm cannot.

The following theorem concerning $S(u)$ is useful and important.

\begin{theorem}\label{The1} Suppose function $f:\{0,1\}^n \rightarrow \{0,1\}^m$, satisfies that there is a string $s \in \{0,1\}^n$  with $s\neq 0^n$, such that $f(x) = f(y)$ if and only if $x = y$ or $x \oplus y = s$. Then
  $\forall u,v \in \{0,1\}^{n-t},S(u)=S(v)$ if and only if $u \oplus v = 0^{n-t}$ or $u \oplus v = s_1$, where $s=s_1s_2$.
\end{theorem}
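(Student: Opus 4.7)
The plan is to first observe that, because $S(u)$ is obtained by concatenating the values $f(uw)$ for $w \in \{0,1\}^t$ sorted in lexicographic order, the string $S(u)$ depends only on the multiset $M_u := \{f(uw) : w \in \{0,1\}^t\}$. Consequently $S(u) = S(v)$ is equivalent to $M_u = M_v$, and the theorem reduces to characterizing when these two multisets coincide. Both directions will then follow by one targeted application of the Simon promise, together with the observation that translation of the last $t$ bits by $s_2$ is a bijection of $\{0,1\}^t$.

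For the sufficiency direction ($\Leftarrow$), the case $u \oplus v = 0^{n-t}$ is immediate since then $u = v$. For $u \oplus v = s_1$, I would introduce the map $\sigma : \{0,1\}^t \to \{0,1\}^t$ defined by $\sigma(w) = w \oplus s_2$, which is a bijection. For every $w$,
\[
(uw) \oplus (v\,\sigma(w)) = (u \oplus v)(w \oplus \sigma(w)) = s_1 s_2 = s,
\]
so the Simon promise yields $f(uw) = f(v\,\sigma(w))$. As $w$ ranges over $\{0,1\}^t$, so does $\sigma(w)$, giving a perfect matching between the entries of $M_u$ and $M_v$; hence $M_u = M_v$ and $S(u) = S(v)$.

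For the necessity direction ($\Rightarrow$), suppose $S(u) = S(v)$ and $u \neq v$. Fix any $w \in \{0,1\}^t$. Since $M_u = M_v$, there exists $w' \in \{0,1\}^t$ with $f(uw) = f(vw')$. Because $u \neq v$, the strings $uw$ and $vw'$ cannot be equal, so the only possibility left by the Simon promise is $uw \oplus vw' = s$; reading off the first $n-t$ bits gives $u \oplus v = s_1$. Combined with the trivial case $u = v$, this yields $u \oplus v \in \{0^{n-t}, s_1\}$, as desired.

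The step I would watch most carefully is linking the two directions through $\sigma$: the proof of $(\Leftarrow)$ is clean because $\sigma(w) = w \oplus s_2$ gives a global bijection between the two blocks, and the proof of $(\Rightarrow)$ silently uses the same structure, since the witness $w'$ is forced to equal $w \oplus s_2$. Once this identification is made, each direction becomes a one-line application of the Simon promise to the concatenated pair $(uw, vw')$, and the reduction of $S$-equality to multiset-equality handles the rest.
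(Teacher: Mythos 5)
Your proposal is correct and follows essentially the same route as the paper: reduce $S(u)=S(v)$ to equality of the multisets $\{f(uw):w\in\{0,1\}^t\}$, then apply the Simon promise to the concatenated strings $uw$ and $vw'$ in each direction. The only (minor, and arguably cleaner) difference is that you establish the multiset equality in the $(\Leftarrow)$ direction via the explicit bijection $w\mapsto w\oplus s_2$, which handles multiplicities automatically, whereas the paper instead proves mutual containment after first showing that the values $f(uw)$, $w\in\{0,1\}^t$, are pairwise distinct when $s_1\neq 0^{n-t}$.
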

\begin{proof}
  Based on the properties of multiset, $\forall u,v\in\{0,1\}^{n-t},S(u)=S(v)$ if and only if $G(u) = G(v)$. So our goal is to prove $\forall u,v \in \{0,1\}^{n-t},G(u)=G(v)$ if and only if $u \oplus v = 0^{n-t}$ or $u \oplus v = s_1$.

  (1) Proof of necessity.

    \qquad (i) If $u \oplus v = 0^{n-t}$, we clearly have $G(u) = G(v)$.

    \qquad (ii) $u \oplus v = s_1$. Since at (i) we have $u \oplus v = 0^{n-t}$, we can now assume that $s_1$ is not $0^{n-t}$. Then we can prove $\forall u \in \{0,1\}^{n-t}, \forall w_1 \neq w_2 \in \{0,1\}^t,f(uw_1)\neq f(uw_2)$. Assume there exists two $t$-bit string $w_1$ and $w_2$ with $w_1 \neq w_2$ and $f(uw_1)= f(uw_2)$, we have $uw_1 \oplus uw_2 = 0^n$ or $uw_1 \oplus uw_2 = s$. Since $w_1 \neq w_2$, we have $uw_1 \oplus uw_2 = s$. Then we have $s = uw_1 \oplus uw_2 = 0^t(w_1 \oplus w_2) = s_1s_2$. This is contrary to $s_1 \neq 0^{n-t}$. So there are no repeating elements in each $G(u)$.

    \qquad \quad Then we can prove $G(u) \subseteq G(v)$. We have $\forall z \in G(u), \exists w \in \{0,1\}^t$ such that $z = f(uw)$. According to the definition of Simon's problem, $f(uw \oplus s)=f(uw)=z$. Then we have $z=f(uw \oplus s)=f((u \oplus s_1)(w \oplus s_2))=f(v(w \oplus s_2)) \in G(v)$. So we have $G(u) \subseteq G(v)$. Similarly we can prove that $G(v) \subseteq G(u)$. As a result, we have $G(u) = G(v)$.

  (2) Proof of adequacy.

  \qquad Since $G(u) = G(v)$, we have $\forall z \in G(u), z \in G(v) $. Then we have $\exists w, w' \in \{0,1\}^t$ such that $z = f(uw)$ and $z = f(vw')$.  As a result, we have $f(uw) = f(vw')$. According to the definition of Simon's problem, we have $uw \oplus vw' = 0^n$ or $uw \oplus vw' = s$.  Therefore, we have $u \oplus v = 0^{n-t}$ or $u \oplus v = s_1$.

\end{proof}

\begin{figure*}[hbtp]
	\begin{minipage}{\linewidth}
		% \begin{algorithm}[H]
			% 	\caption{Distributed quantum algorithm for Simon's problem(two distributed computing nodes)}
			% 	\begin{algorithmic}[1]
				% 		\Procedure{Degree} {}
				% 		{({\bf integer} $n$, {\bf array} $\mathbf{b}$,  {\bf real} $\varepsilon$)}  \Comment{$\mathbf{b}\in\{0,1,*\}^{n+1}$}
				% 		\State {\bf integer} $l:=0, r:=n$;
				% 		\While{$l\leq r$}
				% 		\State $d=\lfloor(l+r)/2\rfloor$;
				% 		\If{$\text{S}( n, \mathbf{b}, \varepsilon,d)$=0} {$l=d+1$};
				% 		\Else \ \  ${r=d-1}$;
				% 		\EndIf
				% 		\EndWhile\\
				% 		\Return $l$;
				% 		\EndProcedure
				%   \end{algorithmic}
			% \end{algorithm}
		\label{algorithm1}
		\begin{algorithm}[H]
			\caption{Distributed quantum algorithm for Simon's problem (two distributed computing nodes)}
			\begin{algorithmic}[1]
				\State $|\psi_0\rangle = |0^{n-1}\rangle|0^{m}\rangle|0^{m}\rangle|0^{2m}\rangle$;
				
				\State $|\psi_1\rangle = (H^{\otimes n-1}\otimes I^{\otimes 4m})|\psi_0\rangle =(H^{\otimes n-1}|0^{n-1}\rangle)|0^{m}\rangle|0^{m}\rangle|0^{2m}\rangle=\frac{1}{\sqrt{2^{n-1}}}\sum_{u\in\{0,1\}^{n-1}}|u\rangle|0^{m}\rangle|0^{m}\rangle|0^{2m}\rangle$;
				
				\State Each computing node queries its own Oracle under the control of the first quantum register: $|\psi_2\rangle=\frac{1}{\sqrt{2^{n-1}}}|u\rangle|f_e(u)\rangle|f_o(u)\rangle|0^{2m}\rangle=\frac{1}{\sqrt{2^{n-1}}}|u\rangle|f(u0)\rangle|f(u1)\rangle|0^{2m}\rangle$;
				
				\State The fourth quantum register performs its own $U_{Sort}$ under the control of the second and third quantum registers: $|\psi_3\rangle=\frac{1}{\sqrt{2^{n-1}}}|u\rangle|f(u0)\rangle|f(u1)\rangle|(\min(f(u0),f(u1))\max(f(u0),f(u1)))\rangle$;
				
				\State Each computing node queries its own Oracle under the control of the first quantum register: $|\psi_4\rangle=\frac{1}{\sqrt{2^{n-1}}}|u\rangle|0^m\rangle|0^m\rangle|(\min(f(u0),f(u1))\max(f(u0),f(u1)))\rangle$;
				
				\State $|\psi_5\rangle=(H^{\otimes n-1}\otimes I^{\otimes 4m})|\psi_4\rangle$;
				
				\State Measure the first quantum register and get an element in $s_1^{\perp}$.
			\end{algorithmic}
		\end{algorithm}
		
		% \For{$u\leftarrow 1$ \KwTo $N-2$}{
			% \tcp{确保$U$第$u$行第$u$列的元素不为$0$}
			% \If{$U$第$u$行第$u$列的元素为$0$}{
				%     \tcp{找$U$的第$u$列模长最大的元素，和第$u$行第$u$列的元素交换。由于$U$为酉矩阵，一定可以找到一个非$0$元素}
				%     设置$pos$为$u$；
				%     \For{$v\leftarrow u+1$ \KwTo $N$}{
					%         \If{$U$第$v$行第$u$列的元素的模长大于$U$第$pos$行第$u$列的元素的模长}{
						%             设置$pos$为$v$；
						%         }
					%     }
				%     将$(P^{pos,u})^\dag$加入$L$；
				
				%     $U\leftarrow P^{pos,u}U$；
				% }
			
		\end{minipage}
	\end{figure*}
	
	\begin{figure*}[hbtp]%参数[h]表示紧跟着文字
		\centering%居中
		\includegraphics[width=6in]{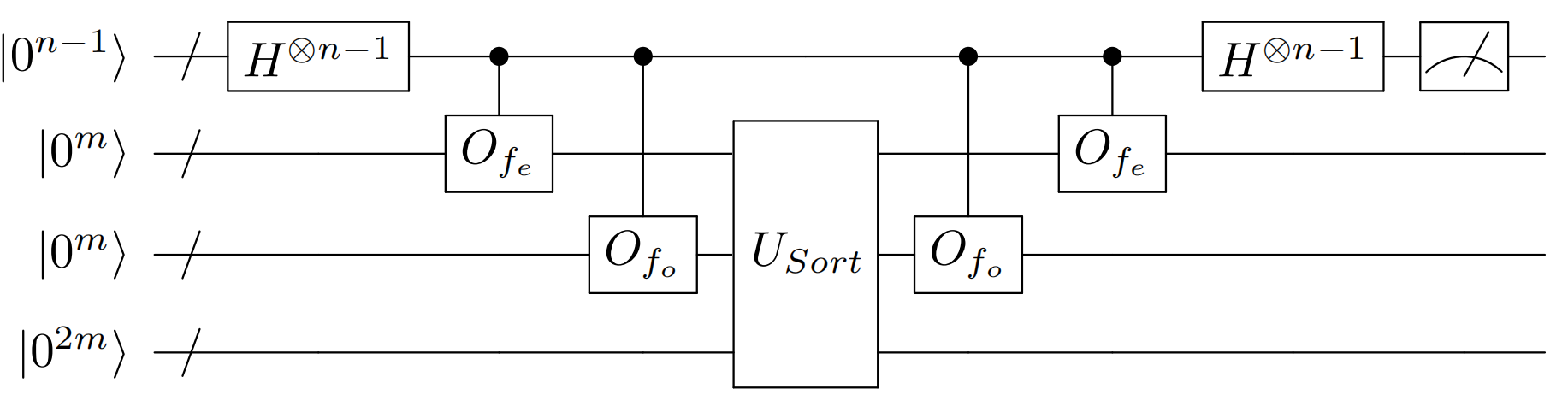}
		\caption{The circuit for the quantum part of distributed quantum algorithm for Simon's problem ($2$ computing nodes).}
		\label{algorithm_2nodes}
	\end{figure*}

\section{Distributed quantum algorithm for Simon's problem}\label{Sec4}

We first give the algorithm with only two distributed computing nodes, i.e. $t=1$. We use an operator $U_{Sort}$ in our algorithm. The effect of operator $U_{Sort}$ in the FIG. \ref{algorithm_2nodes} is: $\forall a,b \in \{0,1\}^{m}$ and $c \in \{0,1\}^{2m}$,
\begin{equation}
	U_{Sort}|a\rangle|b\rangle|c\rangle=|a\rangle|b\rangle|c \oplus (\min(a,b)\max(a,b))\rangle.
\end{equation}

Intuitively, the effect of $U_{Sort}$ is to sort the values in the first two registers, and xor to the third register. $U_{Sort}$ does not change the states of the first two registers (i.e. $|a\rangle$ and $|b\rangle$). We call these two registers the control registers of $U_{Sort}$ . In order to show the control registers of $U_{Sort}$ in quantum circuit diagram clearer, we use $\cdot$ to mark the control registers of $U_{Sort}$ in FIG. \ref{algorithm_2nodes}. Similarly, we use $\cdot$ to mark the control registers of every oracle (i.e. $O_{f_e}/O_{f_o}/O_{f_w}$) in FIG. \ref{algorithm_2nodes} and FIG. \ref{algorithm_multiple_nodes}.

%Simon's algorithm has a classical part and a quantum part. The classical part of our distributed quantum algorithm is the same as the original Simon's algorithm \cite{simon_power_1997}. The quantum part of our algorithm is shown in Algorithm 1.

We give a quantum circuit diagram corresponding to our algorithm \cite{nielsen_quantum_2010} as FIG. \ref{algorithm_2nodes}, where  we can implement controlled quantum gates spanning two computing nodes by  using the quantum teleportation described in subsection \ref{quantum_teleportation},

%The first, second, and fourth wires in FIG. \ref{algorithm_2nodes} are at one computing node, and the third wire is at another computing node. Using the quantum teleportation described in subsection \ref{quantum_teleportation}, we can implement controlled quantum gates spanning two computing nodes.

Our algorithm can be further extended to the case of $2^t$ computing nodes. The second and third registers in the Algorithm 1 will be replaced by $2^t$ registers. The $2^t$ registers in the middle will correspond to $2^t$ distributed computing nodes. The extended algorithm is shown in Algorithm 2.

\begin{figure*}[hbtp]
  \begin{minipage}{\linewidth}
    % \begin{algorithm}[H]
    % 	\caption{Distributed quantum algorithm for Simon's problem(two distributed computing nodes)}
    % 	\begin{algorithmic}[1]
    % 		\Procedure{Degree} {}
    % 		{({\bf integer} $n$, {\bf array} $\mathbf{b}$,  {\bf real} $\varepsilon$)}  \Comment{$\mathbf{b}\in\{0,1,*\}^{n+1}$}
    % 		\State {\bf integer} $l:=0, r:=n$;
    % 		\While{$l\leq r$}
    % 		\State $d=\lfloor(l+r)/2\rfloor$;
    % 		\If{$\text{S}( n, \mathbf{b}, \varepsilon,d)$=0} {$l=d+1$};
    % 		\Else \ \  ${r=d-1}$;
    % 		\EndIf
    % 		\EndWhile\\
    % 		\Return $l$;
    % 		\EndProcedure
    %   \end{algorithmic}
    % \end{algorithm}
	\label{algorithm2}
    \begin{algorithm}[H]
      \caption{Distributed quantum algorithm for Simon's problem ($2^t$ distributed computing nodes)}
      \begin{algorithmic}[1]
        \State $|\phi_0\rangle = |0^{n-t}\rangle(\bigotimes_{w \in \{0,1\}^t}|0^{m}\rangle)|0^{2^tm}\rangle$;

        \State $|\phi_1\rangle = (H^{\otimes n-t}\otimes I^{\otimes 2^{t+1}m})|\phi_0\rangle =(H^{\otimes n-t}|0^{n-t}\rangle)(\bigotimes_{w \in \{0,1\}^t}|0^{m}\rangle)|0^{2^tm}\rangle=\frac{1}{\sqrt{2^{n-t}}}\sum_{u\in\{0,1\}^{n-t}}(\bigotimes_{w \in \{0,1\}^t}|0^{m}\rangle)|0^{2^tm}\rangle$;

        \State Each computing node queries its own Oracle under the control of the first quantum register: $|\phi_3\rangle=\frac{1}{\sqrt{2^{n-t}}}\sum_{u\in\{0,1\}^{n-t}}(\bigotimes_{w \in \{0,1\}^t}|f_w(u)\rangle)|0^{2^tm}\rangle=\frac{1}{\sqrt{2^{n-t}}}\sum_{u\in\{0,1\}^{n-t}}(\bigotimes_{w \in \{0,1\}^t}|f(uw)\rangle)|0^{2^tm}\rangle$;

        \State The fourth quantum register performs its own $U_{Sort}$ under the control of the middle $2^t$ quantum registers: $|\phi_4\rangle=\frac{1}{\sqrt{2^{n-t}}}\sum_{u\in\{0,1\}^{n-t}}(\bigotimes_{w \in \{0,1\}^t}|f(uw)\rangle)|S(u)\rangle$;

        \State Each computing node queries its own Oracle under the control of the first quantum register: $|\phi_6\rangle=\frac{1}{\sqrt{2^{n-t}}}\sum_{u\in\{0,1\}^{n-t}}(\bigotimes_{w \in \{0,1\}^t}|0^{m}\rangle)|S(u)\rangle$;

        \State $|\phi_7\rangle=(H^{\otimes n-t}\otimes I^{\otimes 2^{t+1}m})|\phi_6\rangle$;

        \State Measure the first quantum register and get an element in $s_1^{\perp}$.
      \end{algorithmic}
    \end{algorithm}

      % \For{$u\leftarrow 1$ \KwTo $N-2$}{
            % \tcp{确保$U$第$u$行第$u$列的元素不为$0$}
            % \If{$U$第$u$行第$u$列的元素为$0$}{
            %     \tcp{找$U$的第$u$列模长最大的元素，和第$u$行第$u$列的元素交换。由于$U$为酉矩阵，一定可以找到一个非$0$元素}
            %     设置$pos$为$u$；
            %     \For{$v\leftarrow u+1$ \KwTo $N$}{
            %         \If{$U$第$v$行第$u$列的元素的模长大于$U$第$pos$行第$u$列的元素的模长}{
            %             设置$pos$为$v$；
            %         }
            %     }
            %     将$(P^{pos,u})^\dag$加入$L$；

            %     $U\leftarrow P^{pos,u}U$；
            % }

  \end{minipage}
  \end{figure*}

Similarly, each computing node uses its oracle twice. The quantum circuit diagram of the extended algorithm is as FIG. \ref{algorithm_multiple_nodes}.

\begin{figure*}[hbtp]%参数[h]表示紧跟着文字
  \centering%居中
  \includegraphics[width=6in]{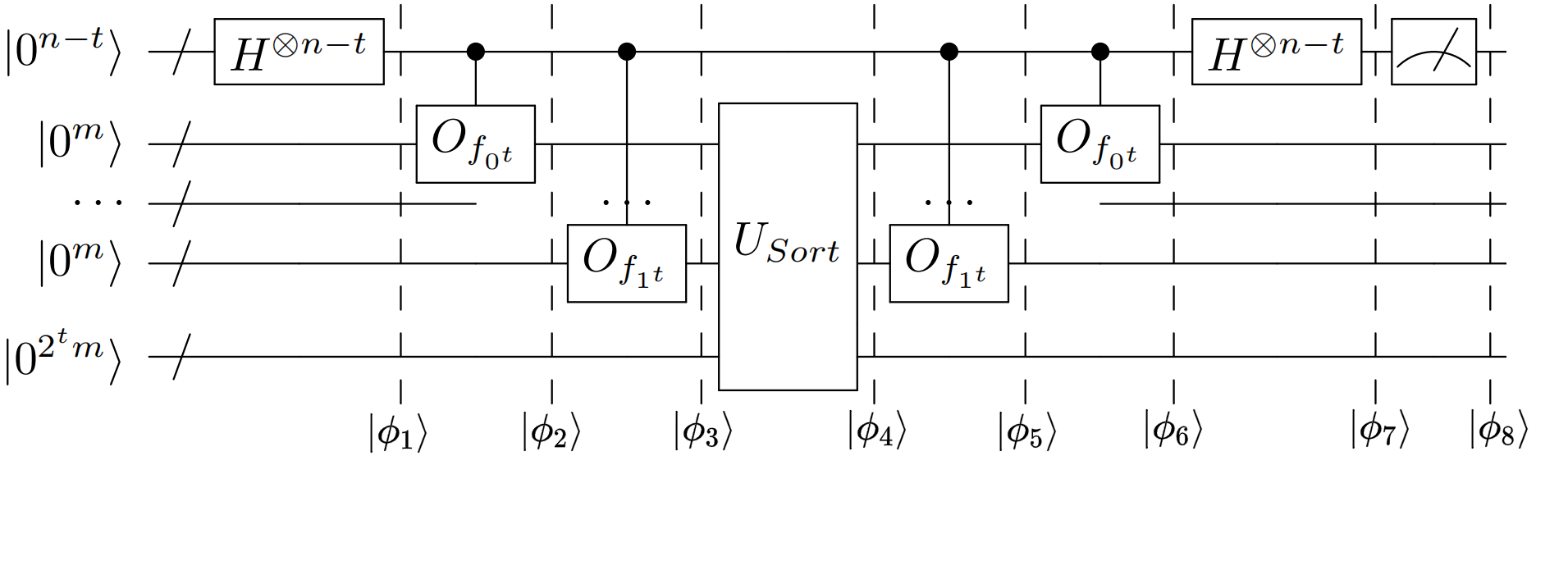}
  \caption{The circuit for the quantum part of distributed quantum algorithm for Simon's problem  ($2^t$ computing nodes).}
  \label{algorithm_multiple_nodes}
\end{figure*}

Similarly, the effect of $U_{Sort}$ in Algorithm 2 is to sort the values in the $2^t$ control registers by lexicographical order, and xor to the target register. It is not difficult to be implemented for sorting multiple elements. By virtue of using sorting network in \cite{paterson_improved_1990}, $2^t$ elements can be sorted in $O(t)$ depth of comparators. Here comparator is a basic circuit module which is easy to be realized.

Note that the oracle query of each quantum computing node in the above algorithm can actually be completed in parallel. In fact, after the first Hadamard transformation,  we can  teleport each group of $n-t$ control bits to every quantum computing node and use this to control the oracle of the computing node.

%with the help of more auxiliary qubits, we can copy the state of the first register after the first Hadamard transformation $\frac{1}{\sqrt{2^{n-t}}}\sum_{u\in\{0,1\}^{n-t}}|u\rangle$ into $\frac{1}{\sqrt{2^{n-t}}}\sum_{u\in\{0,1\}^{n-t}}\underbrace{|u\rangle|u\rangle \ldots |u\rangle}_{2^t}$. That is, we cloned $2^t$ copies of the control bits. We can then teleport each group of $n-t$ control bits to every quantum computing node and use this to control the oracle of the computing node.

\section{Correctness analysis of algorithms}\label{Sec5}

In this section, we prove the correctness of our algorithm. First, we define a notation $G(u)$ as follows.

\begin{defi}
  For all $u \in \{0,1\}^{n-t}$, let multiset $G(u) = \{f(uw)|w \in \{0,1\}^t\}$.
\end{defi}
Notice that there could be multiple identical elements in $G(u)$. An example of $G(u)$ is shown in Appendix \ref{example}.

Then, we can write out the state after the first step of the algorithm in FIG. \ref{algorithm_multiple_nodes}.

\begin{align}
  |\phi_1\rangle&=\frac{1}{\sqrt{2^{n-t}}}\sum_{u\in\{0,1\}^{n-t}}|u\rangle(\bigotimes_{w \in \{0,1\}^t}|0^{m}\rangle)|0^{2^tm}\rangle\\
  &=\frac{1}{\sqrt{2^{n-t}}}\sum_{u\in\{0,1\}^{n-t}}|u\rangle\underbrace{|0^m\rangle \ldots|0^m\rangle}_{2^t}|0^{2^tm}\rangle.
\end{align}

The second step of the algorithm queries the oracle $O_{f_{0^t}}$, resulting in the following state:

\begin{align}
  |\phi_2\rangle&=\left(O_{f_{0^t}}\frac{1}{\sqrt{2^{n-t}}}\sum_{u\in\{0,1\}^{n-t}}|u\rangle|0^m\rangle\right)\underbrace{|0^m\rangle \ldots|0^m\rangle}_{2^t-1}|0^{2^tm}\rangle\\
  &=\frac{1}{\sqrt{2^{n-t}}}\sum_{u\in\{0,1\}^{n-t}}|u\rangle|f_{0^t}(u)\rangle \underbrace{|0^m\rangle \ldots|0^m\rangle}_{2^t-1}|0^{2^tm}\rangle\\
  &=\frac{1}{\sqrt{2^{n-t}}}\sum_{u\in\{0,1\}^{n-t}}|u\rangle|f(u0^t)\rangle \underbrace{|0^m\rangle \ldots|0^m\rangle}_{2^t-1}|0^{2^tm}\rangle.
\end{align}

The algorithm then queries each of the other oracles based on the circuit diagram FIG. \ref{algorithm_multiple_nodes} to get the following states:

\begin{equation}
  |\phi_3\rangle=\frac{1}{\sqrt{2^{n-t}}}\sum_{u\in\{0,1\}^{n-t}}|u\rangle\underbrace{|f(u0^t)\rangle \ldots|f(u1^t)\rangle}_{2^t}|0^{2^tm}\rangle.
\end{equation}

After sorting by using $U_{Sort}$, we have the following state:

\begin{equation}
  |\phi_4\rangle=\frac{1}{\sqrt{2^{n-t}}}\sum_{u\in\{0,1\}^{n-t}}|u\rangle\underbrace{|f(u0^t)\rangle \ldots|f(u1^t)\rangle}_{2^t}|S(u)\rangle.
\end{equation}

After that, we query each oracle again and restore the status of the $2^t$ $m$-bit registers to $|0^m\rangle$. Then we obtain the following state:

\begin{equation}
  |\phi_6\rangle=\frac{1}{\sqrt{2^{n-t}}}\sum_{u\in\{0,1\}^{n-t}}|u\rangle\underbrace{|0^m\rangle \ldots|0^m\rangle}_{2^t}|S(u)\rangle.
\end{equation}

By tracing out the states of the $2^t$ $m$-bit registers in the middle, we can get the state:

\begin{equation}
  |\phi'_6\rangle=\frac{1}{\sqrt{2^{n-t}}}\sum_{u\in\{0,1\}^{n-t}}|u\rangle|S(u)\rangle.
\end{equation}

From Theorem \ref{The1}, we know that the structure of the original Simon's problem exists in function $S$. After Hadamard transformation and measurement on the first register, we can get a string that is in $s_1^{\perp}$. After $O(n-t)$ repetitions of the above algorithm, we can obtain $O(n-t)$ elements in the $s_1^{\perp}$. Then, using the classical Gaussian elimination method, we can obtain $s_1$.

% \section{Comparison with other distributed algorithms}\label{Sec6}

We compare our results with other distributed algorithms. Based on the previous analysis, our distributed quantum algorithm needs $O(n-t)$ queries for each oracle $O_{f_u}$ to solve Simon's problem. However, in order to find $s_1$, distributed classical algorithms need to query oracles $O(\sqrt{2^{n-t}})$ times. Our algorithm has the advantage of exponential acceleration compared with the classical distributed algorithm.

For $t=1$, the algorithm in paper \cite{avron_quantum_2021} requires $O(n^2)$ queries to solve Simon's problem. However, our algorithm only needs $O(n)$ queries and has the advantage of square acceleration. In addition, the method  in \cite{avron_quantum_2021} cannot deal with the case of $t > 1$. Therefore, our distributed quantum algorithm has higher scalability.

\section{Conclusion}\label{Sec7}

In this paper, we have designed a distributed quantum algorithm to solve Simon's problem. With multiple quantum computing nodes processing in parallel, each node needs to query own oracle fewer times. This reduces the depth of query complexity for each node. This helps reducing circuit noise and makes it easier to be implemented with exponential acceleration advantages in the current NISQ era.

Our distributed quantum algorithm has the advantage of exponential acceleration compared with the classical distributed algorithm. Compared with previous distributed quantum algorithms, our algorithm can achieve the advantage of square acceleration and has higher scalability.

\section*{Acknowledgements}
This work is supported in part by the National Natural Science Foundation of China (Nos. 61876195, 61572532), and the Natural Science Foundation of Guangdong Province of China (No. 2017B030311011).
%This work  was  partly supported by the National Natural Science Foundation of China (Nos. 61572532,  61876195,  61272058), the Natural Science Foundation of Guangdong Province of China (No. 2017B030311011),  the Fundamental Research Funds for the Central Universities of China (No. 17lgjc24).%,  the Project of Improving the Basic Ability of Young Teachers in Universities of Guangxi (No. 2017KY0630).

\appendix
\section{An example of $G(u)$ and $S(u)$}
\label{example}

We now give a specific function $f$ in Simon's problem and discuss $G(u)$ and $S(u)$ for $t=2$.

Consider $s=1001$ ($s_1=10$ and $s_2=01$) and the function $f:\{0,1\}^4 \rightarrow \{0,1\}^6$ as follow:

\begin{table}[h] %voc table result
	\centering
	\caption{An example of function $f$}
	\begin{tabular}{|*{2}{c}|*{2}{c}|*{2}{c}|*{2}{c}|}
		\toprule
		                 $x$       &  $f(x)$   &  $x$    &   $f(x)$ &    $x$       &  $f(x)$   &  $x$    &   $f(x)$  \\
		\hline
		0000          & 100101         & 1000          & 101100     &  0100   &  101010       & 1100          & 011001          \\
		0001     & 101100        & 1001         & 100101    &   0101          &011001          & 1101          & 101010             \\
		0010         & 000100          & 1010 & 110101   & 	0110     & 001101        & 1110         & 111100            \\
		0011   &  110101       & 1011          & 000100  &   	0111         & 111100          & 1111 & 001101          \\
		\toprule
	\end{tabular}
	\label{tab:function_f}
	
\end{table}
We can see that $\forall x,y \in \{0,1\}^4$, $f(x)=f(y)$ if and only if $x \oplus y = 0000$ or $x \oplus y =  1001$ (i.e. $x \oplus y = s$).

Under the above conditions, we have
\begin{align}
	G(00)=&\{f(0000),f(0001),f(0010),f(0011)\}\\
	=&\{100101,101100,000100,110101\},\\
	G(01)=&\{f(0100),f(0101),f(0110),f(0111)\}\\
	=&\{101010,011001,001101,111100\},\\
	G(10)=&\{f(1000),f(1001),f(1010),f(1011)\}\\
	=&\{101100,100101,110101,000100\},\\
	G(11)=&\{f(1100),f(1101),f(1110),f(1111)\}\\
	=&\{011001,101010,111100,001101\}.
\end{align}

So the function $G$ is as follows:
\begin{table}[h] %voc table result
	\centering
	\caption{An example of function $G$}
	\begin{tabular}{|*{2}{c}|}
		\toprule
		$u$       &  $G(u)$  \\
		\hline
		00          & \{100101,101100,000100,110101\}                 \\
		01     & \{101010,011001,001101,111100\}         \\
		10         & \{101100,100101,110101,000100\}     \\
		11   &  \{011001,101010,111100,001101\}              \\
		\toprule
	\end{tabular}
	\label{tab:function_G}
\end{table}

We can see that $\forall u,v \in \{0,1\}^2$, $G(u)=G(v)$ if and only if $u \oplus v = 00$ or $u \oplus v =  10$ (i.e. $u \oplus v = s_1$).

For each $G(u)$ ($u \in \{0,1\}^2$), we can obtain $S(u)$ by concatenating all strings in $G(u)$ according to lexicographical order. The function $S$ is as follows:
\begin{table}[h] %voc table result
	\centering
	\caption{An example of function $S$}
	\begin{tabular}{|*{2}{c}|}
		\toprule
		$u$       &  $S(u)$  \\
		\hline
		00          & 000100100101101100110101                 \\
		01     & 001101011001101010111100        \\
		10         & 000100100101101100110101    \\
		11   &  001101011001101010111100              \\
		\toprule
	\end{tabular}
	\label{tab:function_S}
\end{table}

We can see that $\forall u,v \in \{0,1\}^2$, $S(u)=S(v)$ if and only if $u \oplus v = 00$ or $u \oplus v =  10$ (i.e. $u \oplus v = s_1$).

%\nocite{*}
%\bibliography{PRA_distributed_Simon}% Produces the bibliography via BibTeX.

\end{document}